\newcommand{\oeis}[1]{\href{https://oeis.org/#1}{#1}}
\theoremstyle{definition}
\newtheorem{observation}[theorem]{Observation}
\newcommand{\agg}{\operatorname{AGG}}
\newcommand{\sgg}{\operatorname{Single}}
\newcommand{\tgg}{\operatorname{Total}}
\renewcommand{\ggg}{\operatorname{GreedyCoins}}
\begin{document}
\title{Making Change in 2048}

\author{David Eppstein}{Computer Science Department, University of California, Irvine}{eppstein@uci.edu}{}{Supported in part by NSF grants  CCF-1618301 and CCF-1616248.}
\authorrunning{David Eppstein}
\Copyright{David Eppstein}

\subjclass{\ccsdesc[500]{Theory of computation~Discrete optimization}}
\keywords{2048, change-making problem, greedy algorithm, integer sequences, halting problem}

\maketitle

\begin{abstract}
The 2048 game involves tiles labeled with powers of two that can be merged to form bigger powers of two; variants of the same puzzle involve similar merges of other tile values. We analyze the maximum score achievable in these games by proving a min-max theorem equating this maximum score (in an abstract generalized variation of 2048 that allows all the moves of the original game) with the minimum value that causes a greedy change-making algorithm to use a given number of coins. A widely-followed strategy in 2048 maintains tiles that represent the move number in binary notation, and a similar strategy in the Fibonacci number variant of the game (987) maintains the Zeckendorf representation of the move number as a sum of the fewest possible  Fibonacci numbers; our analysis shows that the ability to follow these strategies is intimately connected with the fact that greedy change-making is optimal for binary and Fibonacci coinage. For variants of 2048 using tile values for which greedy change-making is suboptimal, it is the greedy strategy, not the optimal representation as sums of tile values, that controls the length of the game. In particular, the game will always terminate whenever the sequence of allowable tile values has arbitrarily large gaps between consecutive values.
\end{abstract}

\section{Introduction}

The solitaire game 2048 was developed in 2014 by Gabriele Cirulli, based on another game called Threes developed earlier in 2014 by Asher Vollmer~\cite{Her-ICGA-14}.
It is played on a 16-cell square grid, each cell of which can either be empty or contain a tile labeled with a power of two. In each turn, a tile of value 2 or 4 is placed by the game software on a randomly chosen empty cell. The player then must tilt the board in one of the four cardinal directions, causing its tiles to slide until reaching the edge of the board or another tile. When two tiles of equal value slide into each other, they merge into a new tile of twice the value. The game stops when the whole board fills with tiles, and the goal is to achieve the highest single tile value possible.  \autoref{fig:8192} shows the state of the game after approximately $4000$ moves, when a tile with value $8192$ has been reached.

As most players of the game quickly learn, it is not possible to keep playing a single game of 2048 forever. At any step of the game, there must be at least one tile for each nonzero bit in the binary representation of the total tile value. For total tile values just below a large power of two, the number of ones in the binary representation is similarly large, eventually exceeding the number of cells in the board.

\begin{figure}[t]
\centering\includegraphics[scale=0.5]{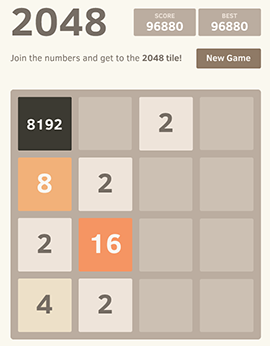}
\caption{A state in the game 2048 in which a tile of value 8192 has been reached}
\label{fig:8192}
\end{figure}

But other variants of 2048 use different tile values than powers of two. Threes uses the sequence of numbers that are either powers of two or three times a power of two:
\[
1, 2, 3, 4, 6, 8, 12, 16, 24, 32, 48, 64, \dots
\]
(It also restricts tile merges to pairs of tiles whose values are equal or differ by a factor of two.) Fives uses 2, 3, and powers of two times 5, giving the sequence of allowable values~\cite{LanUno-FUN-16}
\[
1, 2, 3, 5, 10, 20, 40, 80, 160, 320, 640, \dots
\]
Another variant, called 987, uses as its tile values the Fibonacci numbers,
\[
1, 2, 3, 5, 8, 13, 21, 34, 55, 89, 144, 233, \dots
\]
We can find analogous ad-hoc arguments for why these games must terminate, but can we generalize them to arbitrary systems of tile values? If we define a 2048-like game with a set $S$ as its tile values, is the length of the game and the maximum value that can be achieved controlled, as it is for binary numbers, by the lengths of the shortest representations of arbitrary numbers as sums of members of $S$?

For instance, suppose that we allow any \emph{practical number} as a tile value,
and any merge of two tiles that produces another practical number.
The practical numbers are defined by the property that, for a practical number~$n$, every integer $m<n$ can be expressed as a sum of distinct divisors of $n$. Their sequence begins
\[
1, 2, 4, 6, 8, 12, 16, 18, 20, 24, 28, 30, 32, \dots
\]
There are many more practical numbers than powers of two, and the practical numbers behave in many ways like the prime numbers. In particular, analogously to Goldbach's conjecture for the prime numbers, every even integer can be expressed as a sum of two practical numbers~\cite{Mel-JNT-96}, and therefore every integer can be expressed as a sum of three practical numbers. Because we can express every tile value using a bounded number of practical-number tiles,
does the practical-number variant of 2048 go on forever?

Alternatively, suppose we use $3$-smooth tile values, the numbers whose only prime factors are two or three:
\[
1, 2, 3, 4, 6, 8, 9, 12, 16, 18, 24, 27, 32, 36, \dots
\]
Because the number of distinct $3$-smooth numbers in the range from $1$ to $n$ is only $O(\log^2 n)$, an information-theoretic argument shows that some numbers in this range will require $\Omega(\log n/\log\log n)$ terms in their shortest representation as a sum of $3$-smooth numbers. Therefore, for a game using these tile values to last for $n$ moves, it must use a game board that has at least  $\Omega(\log n/\log\log n)$ cells. Is this analysis tight?

\subsection{New results}
In this paper we show that the answer to these questions is no. 2048-like games are not controlled by the shortest representations of numbers as sums of tile values, but rather by their \emph{greedy representations}, representations generated by a greedy heuristic for the problem of making change using the smallest number of coins from a given coinage system. For the powers of two, the Fibonacci numbers, and the numbers used by Threes and Fives, these greedy representations coincide with the shortest representations, but that is not true for many other natural sets of numbers including the practical numbers and the $3$-smooth numbers. The lengths of greedy representations, in turn, are controlled by the lengths of the gaps between consecutive tile values.

As a consequence, we show that whenever a sequence of numbers has arbitrarily large gaps, the 2048-like game based on those numbers must terminate with a finite limit on its number of moves and on its largest achievable tile value.
For instance, because the practical numbers have inverse-logarithmic density (analogously to the prime number theorem for the density of the prime numbers)~\cite{Wei-QJM-15}, they have arbitrarily large gaps and the game based on them terminates, albeit much more slowly than for the powers of two.

\subsection{Related work}

2048 has been the subject of much past research. Its past investigations include studies of its computational complexity~\cite{Meh-14,AbdAchDas-15,AbdAchDas-FUN-16,LanUno-FUN-16}, artificial intelligence based game strategies~\cite{RodLev-CIG-14,SzuJas-CIG-14,OkaMat-CG-16,YehWuHsu-TCIAIG-17,Mat-ACG-17,Jas-TCIAIG-17}, computer science education~\cite{Nel-JCSC-15}, and computer-human interaction~\cite{PorLeeEge-CHI-15}.

\section{Simplification through abstraction}

Several features of 2048 and its variants complicate their analysis,
possibly making its game play more interesting but
without (it appears) greatly affecting the questions we wish to study,
on how long a game can last or which tile values can be achieved.

\begin{description}
\item[Board geometry.]
The cells of the 2048 board are arranged in a square grid,
which controls both the sliding movement of the tiles across the board and the pairs of tiles that can become adjacent to each other and merge. Much of the strategy of the game involves linearizing this two-dimensional arrangement of cells by finding a zigzag path that covers all the cells of the grid and playing in such a way that tiles move and merge with each other only along this path.

\item[Restricted tile merges.]
In some variations of 2048, such as Threes, certain pairs of tiles cannot merge even when their summed value would be an allowable tile value. For instance, in Threes, the merge $3+1=4$ is not allowed; only pairs of tiles with the same value or with one twice the other can merge. Even in 2048, only pairs of tiles, and not larger combinations of tiles, are allowed to merge.

\item[Unknown or random future events.]
In 2048, the next tile could either have value 2 or 4. In most cases this causes little change to game play, because a tile of value 4 is not significantly different than two consecutive tiles of value 2 that then became merged, but it can interact with the board geometry to cause tiles to become out of position, making continued play more difficult. And in many of these games, the location of each newly placed tile could be any previously-open cell. These unknowns make the game nondeterministic, and complicate the definition of the longest play or highest achievable tile value: do we mean the worst case (the best that a player could achieve against a malicious adversary), best case (the best one could hope to achieve against repeated play with a random adversary), or some kind of probabilistic analysis that determines the distribution or expected value of scores?
\end{description}

To avoid these complications, we define a class of variants of 2048 in which they are eliminated.

\begin{definition}[abstract generalized games]
Given a set $A$ of allowable tile values, an initial element $a\in A$ (usually $a=1$), and a number $n$ of cells, we define the \emph{abstract generalized 2048 game} for $A$ and $a$ to be
a solitaire game in which there are $n$ indistinguishable cells, each of which can either be empty or contain a tile with a value in $A$. We define a \emph{position} of the game to be an assignment of either a tile with a value in $A$ or no tile to each cell of the game. The \emph{initial position} of the game is a position in which all cells are empty.
Starting from the initial position, each step of the game consists of the following actions:
\begin{itemize}
\item The player chooses any empty cell, and a tile of value $a$ is placed into that cell.
\item The player may choose to merge any sets of non-empty cells whose total value belongs to $A$ into a single tile, which is placed on a single cell from its set. The remaining cells in each chosen set become empty.
\end{itemize}
The game ends when, after one of these steps, all cells are nonempty. When this happens, there would be nowhere to place the new tile of value~$a$ in the next step.

We denote the abstract generalized 2048 game on $n$ cells with tile value set $A$ and initial tile value $a$ by $\agg(n,A,a)$ or (when $a=1$) by $\agg(n,A)$.
\end{definition}

\begin{observation}[simulation by abstract games]
With the possible exception of the value of each newly placed tile,
each action in 2048, Threes, Fives, or 987 can be simulated by a corresponding action
in the abstract generalized 2048 game with the same set of tile values and the same number of cells.
Therefore, any upper bound on the number of moves or maximum tile value in the abstract generalized 2048 game provides a valid upper bound for the number of moves or maximum tile value in the corresponding sliding-tile game.
\end{observation}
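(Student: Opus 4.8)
The plan is to treat the abstract generalized game $\agg(n,A,a)$ as a relaxation of each sliding-tile game and to exhibit an explicit simulation under which every legal concrete play becomes a legal abstract play reaching the same positions, so that any bound proved for the abstract game transfers. First I would define the forgetful map $\phi$ that sends a concrete board position to the abstract position on the same $n$ cells recording, for each cell, its tile value or emptiness while discarding the grid coordinates; since the abstract cells are indistinguishable, $\phi$ is well defined, and it preserves both the number of occupied cells and the multiset of tile values, in particular the largest value present. The goal is then to show that $\phi$ carries each legal concrete move to a legal abstract move, after which induction on the length of play gives the claim.

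Next I would dispatch the two complications that are pure relaxations rather than features the abstract game must reproduce. For restricted tile merges: a concrete merge combines a set $T$ of non-empty cells whose total equals the value of the resulting tile, and that value lies in $A$ by the concrete rules; but ``merge any set of non-empty cells whose total value belongs to $A$'' is precisely the abstract merge rule, so $\phi(T)$ is a legal abstract merge. Restricting the concrete game to pairs, or to pairs that are equal or differ by a factor of two, only shrinks the available moves, so every concrete merge remains legal abstractly. For board geometry: the sliding dynamics serve only to determine which tiles may become adjacent and onto which cell a merged tile lands, and the abstract game has no adjacency constraint while allowing the merged tile to be placed on any cell of its set, so the geometric restrictions impose no obstruction either.

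The placement step and the unknown-future complication I would handle together, and this is where I expect the only real care to be needed. In the concrete game the new tile appears at an adversarially or randomly chosen empty cell, whereas the abstract player chooses freely; since the cells are indistinguishable this is again a pure relaxation, and the abstract player can match $\phi$ applied to the concrete placement. The genuine discrepancy is the value of the placed tile, which in $2048$ may be $4$ rather than the initial value $a$, and this is the reason for the qualifier in the statement. The hard part will be arguing that it does not endanger the upper-bound direction: I would note that a placed tile of value $4$ can be replaced by two separate value-$a$ placements followed by one abstract merge, which only lengthens the abstract play, so a concrete play of $m$ moves maps to a legal abstract play of at least $m$ moves attaining the same tile values, and any upper bound $M$ on the length or on the largest abstract tile value therefore bounds the concrete quantities as well. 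Assembling the three relaxations with this observation, induction shows that every legal concrete play maps to a legal abstract play, which yields the stated transfer of upper bounds.
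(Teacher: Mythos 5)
Your overall plan---a forgetful map under which board geometry, spawn location, and merge restrictions all become pure relaxations---is exactly the argument the paper intends (the paper states this as an observation with no proof at all), and those parts of your write-up supply correct detail. The genuine gap is in the one step you yourself flagged as the hard part: absorbing spawned $4$-tiles in 2048 into the abstract game \emph{on the same number of cells}. Replacing one placed $4$ by two value-$a$ placements followed by a merge requires a second empty cell to exist at that moment, and 2048 will happily spawn a $4$ into the \emph{last} empty cell (after which the forced slide may produce a merge and keep the game alive). At such a move your simulation is not legal: after the first value-$a$ placement the abstract board is already full, the abstract game has ended, the concrete game has not, and your induction breaks.

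Moreover this is not a repairable technicality at the same board size, because the statement you assert---that bounds transfer even when $4$s are spawned---is false. Take $A=\{2,4,8,\dots\}$ with spawn value $2$; the recurrence of Theorem~\ref{thm:2048-recur} gives $\sgg(2,A)=4$ and $\tgg(2,A)=6$, yet a two-cell 2048 in which three consecutive $4$s are spawned legally reaches the position with tiles $8$ and $4$: maximum tile $8$ and total value $12$, both exceeding the abstract bounds. (The same phenomenon on the real board is why the folklore maximum tile in 2048 is $131072=2^{17}$, attainable only with lucky $4$-spawns, rather than the $2^{16}$ permitted when every spawn is a $2$.) This is precisely what the paper's qualifier ``with the possible exception of the value of each newly placed tile'' excludes: the observation transfers bounds for the simulation of all actions other than non-minimal spawns, and makes no claim about $4$-spawns---indeed, read without that qualifier, its ``therefore'' would suffer the same defect as your proof. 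If you do want to handle $4$-spawns, the clean repair of your idea is to simulate the $n$-cell concrete game inside $\agg(n+1,A,a)$, permanently reserving one extra cell as scratch space: when a $4$ is spawned, place one value-$a$ tile in the scratch cell, place the second in the spawn cell, and merge the pair (together with any tiles merged by the ensuing slide, since the abstract game allows multi-tile merges) back onto the spawn cell, freeing the scratch cell again. Every concrete move then maps to one or two legal abstract steps, giving the correspondingly weaker but correct conclusion that the $(n+1)$-cell abstract game bounds the $n$-cell concrete game.
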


\section{Optimal strategy in the abstract game}

The abstract generalized 2048 game eliminates the complications of board geometry, tile position, and sliding mechanics from the game, making its analysis much simpler. As a consequence, we can characterize the optimal strategies in this game. We begin by describing some helpful move-ordering principles.

\begin{definition}[eager sequences]
We say that a sequence of steps in $\agg(n,A)$ is \emph{eager}
if each merge of tiles is performed in the first step at which all of the tiles to be merged have their merged values, rather than delaying the merge until some later step.
\end{definition}

\begin{observation}[all sequences can be made eager]
\label{obs:eager}
If a position in $\agg(n,A)$ can be reached by a sequence of steps, it can be reached by an eager sequence of steps.
\end{observation}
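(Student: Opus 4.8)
The plan is to prove this by an exchange argument that repeatedly pulls each delayed merge back to the earliest step at which it becomes possible, checking that every such move preserves both feasibility and the final position. First I would record the combinatorial skeleton of a fixed sequence reaching the position $P$. Since a tile, once merged into another, vacates its cell, every tile is an input to at most one later merge; hence the placements and merges of the sequence form a forest whose leaves are the placed tiles of value $a$, whose internal nodes are the merges, and whose roots are exactly the tiles present in $P$. Because the cells are indistinguishable, a position is nothing more than the multiset of its tile values, so the identity of cells is irrelevant and only the \emph{number} of occupied cells matters for feasibility.

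Next I would define, for a sequence $\sigma$, the potential $\Phi(\sigma)$ to be the sum over all merges of the step index at which that merge is performed. If $\sigma$ is not eager, then some merge $M$ is performed at a step $j$ even though all of $M$'s input tiles are already present at some earlier step; let $i<j$ be the first step at which all inputs of $M$ have been formed. I would modify $\sigma$ by performing $M$ as soon as its inputs are present, during step $i$, and leaving every other placement and merge exactly where it was. This changes only the timing of $M$ and strictly decreases $\Phi$ by $j-i>0$, so, $\Phi$ being a nonnegative integer, iterating the move must terminate; at termination every merge sits at the first step its inputs are available, which is precisely the eager condition. Because the leaves and the merges (hence the roots of the forest) are unchanged throughout, the resulting eager sequence still reaches $P$.

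The hard part, and the step I would spend the most care on, is verifying that moving $M$ earlier keeps the sequence valid. The inputs of $M$ are consumed by $M$ and by no other merge, so performing $M$ at step $i$ robs no intervening step of a tile it needs, and $M$'s output is required only at steps past $j\ge i$, where it remains available. The delicate point is that no placement between steps $i$ and $j$ may be left without an empty cell. Here I would prove the monotonicity lemma that the modified sequence never occupies more cells than $\sigma$: before step $i$ and at or after step $j$ the two boards are identical, while between steps $i$ and $j$ they agree except that the modified one has replaced the several input tiles of $M$ by the single output tile of $M$, and so occupies strictly fewer cells. Consequently every placement that found an empty cell in $\sigma$ still finds one, which is exactly what is needed to conclude that the eager sequence is feasible and reaches $P$.
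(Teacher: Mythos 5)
Your proposal is correct, but there is nothing in the paper to compare it to: the paper states this as an \emph{Observation} with no proof at all, treating it as self-evident, and your exchange argument supplies exactly the details left implicit. The two pillars of your argument are sound. First, the potential $\Phi$ (the sum of the step indices of the merges) strictly decreases with each exchange and is a nonnegative integer, so the rescheduling process terminates, and by construction it can only terminate at an eager sequence. Second, and this is the real content, the feasibility check: advancing a merge $M$ from step $j$ to the first step $i$ at which all of its inputs exist changes the intermediate boards only by replacing the several input tiles of $M$ with its single output tile, so at every moment the modified sequence occupies at most as many cells as the original; hence every placement that found an empty cell still finds one, and the game never ends prematurely. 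Since cells are indistinguishable (positions are multisets of values), and the boards before step $i$ and from step $j$ onward coincide with the originals, the final position is still $P$. Your forest observation correctly justifies the other key point, that $M$'s inputs are consumed by no step strictly between $i$ and $j$, because each tile enters at most one merge. Two cosmetic remarks: the decrease in occupied cells is weak rather than strict in the degenerate case of a singleton merge set (which the game's definition technically permits as a no-op), so ``at most as many'' is the right phrase; and if an input of $M$ is itself produced by a merge scheduled in step $i$, your rescheduled $M$ must be ordered after that merge within step $i$, which the game's step structure (place, then merge sets) accommodates under the paper's reading of eagerness. Neither point affects correctness.
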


\begin{lemma}[single-tile-first strategy]
\label{lem:stf}
Let $P$ be a position in $\agg(n,A)$ that can be reached by a sequence of steps from the initial position. Then there exists a non-empty cell $c$ of value $v$ in $P$, and a sequence of steps that reaches $P$ from the initial position, with the following structure:
\begin{itemize}
\item First, perform a sequence of steps that reaches the position $P'$, where $P'$ has a tile of value $v$ in cell $c$ and $n-1$ empty cells.
\item Next, perform a sequence of steps in the game $\agg(n-1,A)$, using only the cells that are empty in position $P'$, to reach the position $P''$ in that game corresponding to position $P$ in $\agg(n,A)$ (the position formed from $P$ by removing one cell of value $m$).
\end{itemize}
\end{lemma}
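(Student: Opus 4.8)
The plan is to take an arbitrary sequence of steps reaching $P$ (which, by \autoref{obs:eager}, we may assume to be eager) and reorganize it into the two phases described, by identifying one tile of $P$ that can be built entirely first. The starting point is a conservation law: placing a tile adds exactly $a$ to the total value on the board and a merge leaves the total unchanged, so after $k$ steps the total equals $ka$. Consequently every placed tile of value $a$ is, through the chain of merges it participates in, eventually absorbed into exactly one tile of the final position $P$ --- each tile is merged at most once, so following merges forward is deterministic. This yields a decomposition of the whole history into a forest with one tree per tile of $P$: the leaves of a tree are the placements contributing to that tile and its internal nodes are the merges producing it. Crucially, every individual placement and every individual merge belongs to exactly one tree, since all tiles entering a single merge share the same final destination.

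First I would fix $c$ and $v$ by letting $t^*$ be the tree whose leaf is the tile placed at the very first step, taking $v$ to be the value of the corresponding tile of $P$ and $c$ the cell it occupies in $P$ (cells being indistinguishable, we are free to name it so). The reorganized sequence then consists of \emph{Phase 1}, which performs only the placements and merges of $t^*$ in their original relative order, and \emph{Phase 2}, which performs all the remaining placements and merges in their original relative order while leaving cell $c$ untouched. Phase 1 uses at most $n$ cells and terminates with a single tile of value $v$ in $c$ and every other cell empty, which is exactly the intermediate position $P'$; Phase 2 then operates entirely within the $n-1$ cells other than $c$, so it is literally a play of $\agg(n-1,A)$, and because the remaining tiles finish in their $P$-configuration it reaches the position $P''$ corresponding to $P$.

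The main obstacle, and the place where the choice of $t^*$ earns its keep, is verifying that Phase 2 never runs out of cells --- that the $n-1$ cells suffice even though the original play was allowed to use all $n$. Here I would argue that, since $t^*$'s first leaf is placed at step $1$ and $t^*$ survives in $P$, the tiles of $t^*$ occupy at least one cell at every moment from step $1$ onward. Hence at any point the tiles not belonging to $t^*$ occupy at most $n-1$ cells, and immediately before any placement of a non-$t^*$ tile (where the original board had an empty cell, so at most $n-1$ cells were occupied) they occupy at most $n-2$; adding back the single cell $c$ held by $t^*$ in Phase 2 keeps the occupancy at most $n-1$ there, so an empty cell is always available when Phase 2 needs to place. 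Since the relative order of the non-$t^*$ operations is unchanged, every merge in Phase 2 acts on tiles that already exist, so the reordering is legal throughout and the two-phase sequence reaches $P$ as required.
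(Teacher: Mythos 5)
Your proof is correct and follows essentially the same route as the paper's: both trace each placement and merge back to the final tile of $P$ it contributes to, pick the tile fed by the very first placement, and reorder the sequence so that all of that tile's operations are performed first (reaching $P'$), with the remaining operations performed afterwards as a play of $\agg(n-1,A)$. The only differences are bookkeeping: the paper uses eagerness so that each step's merges travel with its placement and the split can be done step-by-step, whereas you partition at the level of individual operations (making eagerness unnecessary) and give a more explicit occupancy count showing that Phase 2's placements always find an empty cell among the $n-1$ cells other than $c$, a point the paper handles only implicitly.
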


\begin{proof}
Let $S$ be an eager sequence of steps that produces position $P$. By assumption, $S$ exists, and we may assume by Observation~\ref{obs:eager} that $S$ is eager.
By running the sequence of steps in $S$ backwards from $P$,
we may determine, for each position reached during the course of sequence $S$, which of its nonempty tiles eventually contribute to each tile of $P$.
By the eager property of $S$, each merge produced in each step involves only tiles that contribute to the same cell as the newly-placed tile in that step.

Let $c$ be the cell of $P$ to which the first newly-placed tile contributes, and let $v$ be the value of the tile in cell $c$ of position $P$. Because the cells are indistinguishable, we may rearrange the cells of $\agg(n,A)$ so that the first newly-placed tile is placed into cell $c$, and so that each subsequent merge step involving this tile places the merged tile back into cell $c$.
After this rearrangement, cell $c$ is always occupied by a tile that contributes to the eventual value in cell~$c$.
We may then separate $S$ into two subsequences of steps, the subsequence $S_1$ of steps whose newly placed tile contributes to $c$ and the subsequence $S_2$ of steps in which the newly placed tile contributes to some other cell of $P$.

Then subsequence $S_1$ may be performed first, before any steps of $S_2$. This change of order causes positions of the game to be empty in $S_1$ that are non-empty in $S$, but those positions do not contribute to $c$ and therefore do not affect what happens in these steps. Performing $S_1$ reaches state $P'$, as described by the statement of the lemma.

The cells other than $c$ form an instance of $\agg(n-1,A)$, and each step of $S_2$ operates only on these cells because at each of these steps, $c$ is occupied by a tile that is unchanged by that step.
Therefore, $S_2$ may be performed on $\agg(n-1,A)$ to reach position $P''$.
Because $c$ is the only nonempty cell after $S_1$ and is unused by $S_2$, it is valid to perform the concatenation of subsequences $S_1S_2$, which reaches state $P$ with the desired step ordering.
\end{proof}

\begin{lemma}[step-by-step reachability for single tiles]
\label{lem:step-by-step}
Let $x>1$ be a tile value in set $A$, and let $y$ be the largest value in $A$ that is less than $x$.
Let $n$ be a positive integer, let $P_x$ be the position consisting of one cell containing a tile of value $x$ and $n-1$ empty tiles, and let $P_y$ be defined in the same way for value $y$.
Then there is a sequence of steps in $\agg(n,A)$ that reaches $P_x$ if and only if the following conditions are both true:
\begin{enumerate}
\item There is a sequence of steps in $\agg(n,A)$ that reaches $P_y$.
\item There is a sequence of steps in $\agg(n-1,A)$ that reaches a position of total value $x-y$.
\end{enumerate}
\end{lemma}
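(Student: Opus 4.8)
The plan is to prove the two directions separately, after recording one small preliminary fact about totals. Since every step places a single tile of value $a=1$ and merges preserve the total board value, after $k$ steps the total value is exactly $k$. Consequently the set of totals reachable on $m$ cells is downward closed: if a sequence reaches total $T$, truncating it after $T'\le T$ steps reaches total $T'$. I would use this repeatedly.

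For the ``if'' direction I would argue by direct construction. Given condition~1, first reach $P_y$, leaving a single tile of value $y$ in some cell $c$ and $n-1$ empty cells. Treating those $n-1$ cells as an instance of $\agg(n-1,A)$ and invoking condition~2, I would then build a position of total value $x-y$ on them without ever touching $c$. A single final merge of $c$ together with all occupied cells among the other $n-1$ is legal, since its total is $y+(x-y)=x\in A$, and it yields exactly $P_x$.

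For the ``only if'' direction I would begin from an eager sequence reaching $P_x$, which exists by \autoref{obs:eager}. Since $x>1=a$, the tile of value $x$ cannot be a freshly placed tile, so it is produced by the final merge, which combines a set $T$ of at least two tiles of total value $x$; each tile of $T$ therefore has value strictly below $x$, hence at most $y$. Applying \autoref{lem:stf} to $P_x$ itself is useless, as $P_x$ has a single nonempty cell; instead I would apply it to the reachable position $R$ obtained just after the final tile is placed and just before the final merge, whose nonempty cells are exactly $T$. This produces a value $v\le y$ together with (i) a sequence reaching $P_v$ on $n$ cells and (ii) a sequence in $\agg(n-1,A)$ reaching total value $x-v$.

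The crux is reconciling the extracted value $v$, which may be strictly smaller than $y$, with the value $y$ demanded by the statement. Condition~2 is immediate: since $v\le y$ we have $x-v\ge x-y\ge 0$, so downward closure applied to (ii) gives a sequence in $\agg(n-1,A)$ reaching total $x-y$. For condition~1 I would reuse the build-and-merge idea: from (i), place the tile of value $v$ in a cell $c$; because $y-v\le x-v$, downward closure applied to (ii) yields total $y-v$ on the remaining $n-1$ cells; merging $c$ with those cells is legal, having total $v+(y-v)=y\in A$, and hence reaches $P_y$. I expect this bridging step --- passing from an arbitrary merge summand $v$ up to the predecessor $y$ by manufacturing the complementary total and merging --- to be the main obstacle, and the downward closure of reachable totals to be exactly the property that makes it go through.
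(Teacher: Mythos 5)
Your proof is correct and takes essentially the same route as the paper's: the ``if'' direction is the identical build-then-merge construction, and the ``only if'' direction likewise rests on Lemma~\ref{lem:stf} together with truncation of step sequences (the paper's implicit form of your ``downward closure'' fact, used there as ``the first $k$ steps reach a position of total value $k$''). The only local difference is condition~1, which the paper obtains directly by truncating the original sequence to its first $y$ steps and merging all tiles into one (legal since the total $y$ lies in $A$), whereas you re-derive it from the Lemma~\ref{lem:stf} decomposition by building the complementary total $y-v$ alongside $P_v$ and merging---more roundabout, but equally valid.
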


\begin{proof}
We prove separately that both conditions imply reachability of $P_x$, and that reachability of $P_x$ implies both conditions.

\begin{description}
\item[(1 \& 2) $\Rightarrow P_x$:]~\\
Clearly if both conditions are true, then we can use the sequence from the first condition to reach $P_y$, then concatenate the sequence of steps from the second condition to reach a position that includes both $y$ and some other tiles of total value $x-y$, and finally perform a single merge operation to combine all of these tiles to a single tile of value $x$.

\item[$P_x\Rightarrow$ (1):]~\\
Let $S$ be any sequence of steps that reach $P_x$. Then the first $y$ steps of $S$ reach a position of total value $y$, from which $P_y$ can be formed by one more merge operation.

\item[$P_x\Rightarrow$ (2):]~\\
$P_x$ is reachable if and only if we can reach a position $P'$ of total value $x-1$, with at least one empty cell, so that the newly placed tile of the next step creates total value~$x$.
By the single-tile-first strategy (Lemma~\ref{lem:stf}), $P'$ is reachable if and only if there exists $z\in A$, with $0<z<x$,
such that $P_z$ is reachable in $\agg(n,A)$ and the remaining cells of $P'$, of total value $x-z-1$ and with at least one empty cell, are reachable in $\agg(n-1,A)$.
If $y=z$ then one more step in $\agg(n-1,A)$ places a new tile of value 1 in the empty cell and creates a position of total value $x-y$, meeting condition~2. If, on the other hand, $y>z$, then $x-y\le x-z-1$ and
the first $x-y$ steps in $\agg(n-1,A)$ already create a position of total value $x-y$, again meeting condition~2.
\end{description}
\end{proof}

\begin{corollary}[threshold of single-tile reachability]
For every $n$ and $A$ then there exists a value $\sgg(n,A)\in A\cup\{\infty\}$ such that
the positions $P_x$ (with one tile of value $x$ and $n-1$ empty cells) are reachable in $\agg(n,A)$ if and only if $x\in A$ and $x\le \sgg(n,A)$. If $\sgg(n,A)$ is finite, it is the maximum single tile value achievable in game $\agg(n,A)$; if not, all tile values are achievable.
\end{corollary}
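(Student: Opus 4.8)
The plan is to recognize this statement as an essentially immediate consequence of the first condition in Lemma~\ref{lem:step-by-step}, whose content is exactly a \emph{downward closure} property: the set of values $x$ for which the single-tile position $P_x$ is reachable forms an initial segment of $A$. Once that is established, the threshold $\sgg(n,A)$ is just the supremum of this initial segment, and the remaining assertions follow by unwinding definitions.

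First I would fix $n$ and $A$ and let $R=\{x\in A: P_x\text{ is reachable in }\agg(n,A)\}$. The value $1$ lies in $R$, since placing a single tile reaches $P_1$ directly. To show $R$ is downward closed in the chain $(A,\le)$, I would induct on the position of $x$ within $A$: if $x>1$ lies in $R$ and $y$ is the largest element of $A$ below $x$, then condition~(1) of Lemma~\ref{lem:step-by-step} immediately gives $y\in R$; iterating this down to $1$ shows that every element of $A$ below $x$ lies in $R$. Hence $R$ is an initial segment of the totally ordered set $A$. Such a set either possesses a maximum element or is unbounded above within $A$; in the first case I define $\sgg(n,A)=\max R\in A$, and in the second case $R=A$ and I set $\sgg(n,A)=\infty$. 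By construction $P_x$ is reachable if and only if $x\in A$ and $x\le\sgg(n,A)$, which is the stated characterization.

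It then remains to identify $\sgg(n,A)$, when finite, with the maximum value attainable by \emph{any} tile in any reachable position, and to observe that an infinite threshold makes all tile values achievable. The latter is immediate from $R=A$. For the former, the only point needing care is the reverse direction: if a tile of value $v$ appears in some reachable position $P$, I must show $P_v$ is itself reachable, so that $v\le\sgg(n,A)$. I would obtain this by the same device used in Lemma~\ref{lem:stf}: make the reaching sequence eager by Observation~\ref{obs:eager}, trace it backwards to find the steps whose newly placed tiles contribute to the chosen value-$v$ tile, and perform only that subsequence. Because this subsequence occupies at each moment a subset of the cells occupied at the corresponding moment of the full sequence, it never exceeds $n$ cells and ends with the value-$v$ tile alone, i.e.\ at exactly $P_v$. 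The main obstacle is therefore not the downward closure, which is essentially a single application of the preceding lemma, but this isolation step, where one must confirm that an \emph{arbitrary} achieved tile, rather than merely the first-placed tile singled out by Lemma~\ref{lem:stf}, can be reached on its own.
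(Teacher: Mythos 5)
Your proof is correct, and its core matches what the paper intends: the paper states this corollary with no proof at all, treating it as immediate from Lemma~\ref{lem:step-by-step}, and your first half --- downward closure of single-tile reachability via condition~(1) of that lemma, hence an initial segment of $A$ whose supremum is $\sgg(n,A)$ --- is exactly that implicit argument. Where you genuinely diverge is on the second sentence of the statement. You correctly observe that identifying $\sgg(n,A)$ with the maximum tile value appearing in \emph{any} reachable position (not merely in single-tile positions $P_x$) requires something extra, and you supply a self-contained isolation argument: take an eager sequence, trace contributions backward to an arbitrary chosen tile of value $v$, and execute only the contributing subsequence. This is sound, and the key point that makes it work is one you state explicitly: since you run \emph{only} the contributing steps and discard the rest, the occupied cells at every moment form a subset of those occupied in the original sequence, so no capacity problem arises --- this is precisely the issue that forces the proof of Lemma~\ref{lem:stf} to single out the cell of the \emph{first}-placed tile (there, the non-contributing subsequence must still be executable afterwards with cell $c$ permanently occupied). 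The paper, by contrast, only obtains this ``no tile can ever exceed the threshold'' fact later, as a byproduct of Theorem~\ref{thm:xreach} together with Observation~\ref{obs:monsgg}, since in any reachable position the sorted values satisfy $v_i\le\sgg(i,A)\le\sgg(n,A)$. So your route is more self-contained at the point where the corollary appears, at the cost of redoing the eager-sequence analysis; the paper's ordering gets the same conclusion for free once the full characterization of reachable positions is in hand.
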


\begin{observation}[monoticity of single-tile thresholds]
\label{obs:monsgg}
For all $n>1$ and $A$, $\sgg(n,A)\ge\sgg(n-1,A)$.
\end{observation}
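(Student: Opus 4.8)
The plan is to prove the slightly stronger statement that every single-tile value attainable on $n-1$ cells is also attainable on $n$ cells; the inequality $\sgg(n,A)\ge\sgg(n-1,A)$ then follows at once from the preceding corollary. Indeed, that corollary identifies the set of single-tile values reachable in $\agg(m,A)$ with $\{x\in A: x\le\sgg(m,A)\}$ (interpreted as all of $A$ when the threshold is $\infty$), so a containment between these reachable sets for $m=n-1$ and $m=n$ immediately yields the desired comparison of thresholds, and it covers the infinite case uniformly along the way.

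To establish the containment I would use a simulation that simply ignores one extra cell. Suppose the single-tile position of value $x$ is reachable in $\agg(n-1,A)$ by some sequence $S$. In $\agg(n,A)$, I designate $n-1$ of the cells to play the roles of the cells of the smaller game and leave the remaining spare cell empty for the entire simulation. I then replay $S$ step for step, placing each new tile of value $a$ into the cell corresponding to the one $S$ used and performing exactly the same merges. After the whole sequence, the designated cells hold precisely the target single-tile position while the spare cell remains empty, so $\agg(n,A)$ ends in the single-tile position of value $x$, as required.

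The only point requiring care is that every replayed step is legal, and this presents no real obstacle. A merge is permitted precisely when the total value of the combined tiles lies in $A$, a condition untouched by the empty spare cell; and a step can always place its new tile, since $S$ had an empty cell available at that moment and the corresponding cell is still empty in the larger game (an extra empty cell can only make placement easier, never harder). Thus the simulation is valid throughout, which completes the argument.
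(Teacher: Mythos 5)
Your proof is correct and follows exactly the paper's approach: the paper's own one-line argument is ``if we can reach any single tile value $v$ in game $\agg(n-1,A)$, we can also reach it in $\agg(n,A)$ by ignoring the extra cell,'' which is precisely your simulation. You have merely spelled out the details (the step-by-step replay, the legality of each move, and the appeal to the threshold corollary) that the paper leaves implicit.
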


\begin{proof}
If we can reach any single tile value $v$ in game $\agg(n-1,A)$, we can also reach it in $\agg(n,A)$ by ignoring the extra cell.
\end{proof}

\begin{theorem}[characterization of reachable positions]
\label{thm:xreach}
Let $n$ and $A$ be given. Then a position $P$ of $\agg(n,A)$ is reachable by a sequence of steps from its initial position if and only if the sequence of its tile values $v_1,\dots v_n$ (sorted from smallest to largest, with $v_i=0$ if there are at least $i$ empty cells in $P$) satisfies the inequalities $v_i\le\sgg(i,A)$ for all~$i$.
\end{theorem}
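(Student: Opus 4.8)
The plan is to prove both directions by induction on the number of cells $n$, treating an empty cell as a tile of value $0$ so that a position is fully described by its sorted value list $v_1\le\cdots\le v_n$.

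For the ``only if'' direction, suppose $P$ is reachable. I would apply the single-tile-first strategy (Lemma~\ref{lem:stf}) to peel off one cell $c$, of some value $v=v_k$: the lemma produces a sequence that first reaches the position with $v$ in $c$ and everything else empty, and then plays a game on the remaining $n-1$ cells reaching the position $P''$ obtained from $P$ by deleting $c$. The first phase exhibits a single tile of value $v$ on $n$ cells, so $v_k=v\le\sgg(n,A)$ by the definition of $\sgg$; the second phase shows $P''$ is reachable in $\agg(n-1,A)$, so the induction hypothesis applies to its sorted values.

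The delicate point, and the main obstacle, is that the lemma peels an \emph{arbitrary} cell whose value $v_k$ need not be the largest, so the deleted list $u_1\le\cdots\le u_{n-1}$ (which is $v_1,\dots,v_n$ with the single entry $v_k$ removed) is misaligned with $v_1,\dots,v_n$ around index $k$. I would track this explicitly: $u_i=v_i$ for $i<k$ and $u_i=v_{i+1}$ for $i\ge k$. The induction hypothesis $u_i\le\sgg(i,A)$ then yields $v_i\le\sgg(i,A)$ directly for $i<k$, and yields $v_i=u_{i-1}\le\sgg(i-1,A)\le\sgg(i,A)$ for $i>k$ by monotonicity (Observation~\ref{obs:monsgg}). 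Only the index $i=k$ is left uncovered, and the key observation is that it can be recovered without forcing a particular choice of peeled cell: if $k<n$ then $v_k\le v_{k+1}=u_k\le\sgg(k,A)$ by sortedness, and if $k=n$ then the first phase already gave $v_n=v\le\sgg(n,A)$. Thus an arbitrary peel suffices, which is exactly what makes Lemma~\ref{lem:stf} usable here.

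For the ``if'' direction, suppose $v_1\le\cdots\le v_n$ satisfies $v_i\le\sgg(i,A)$ for all $i$; this direction needs only the corollary on the threshold of single-tile reachability and not the peeling lemma. If $v_n=0$ then $P$ is the empty initial position and is trivially reachable. Otherwise $v_n\in A$ and $v_n\le\sgg(n,A)$, so by that corollary the position with one tile of value $v_n$ and the rest empty is reachable on $n$ cells. I would reach it first, then ignore the cell holding $v_n$ and treat the remaining $n-1$ empty cells as an independent instance of $\agg(n-1,A)$. Since $v_i\le\sgg(i,A)$ for every $i\le n-1$, the induction hypothesis makes the position with sorted values $v_1,\dots,v_{n-1}$ reachable there, and concatenating the two phases reaches $P$. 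The base case $n=1$ is immediate.
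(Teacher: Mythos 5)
Your proof is correct, and it is in fact more complete than the paper's own argument. For the forward (``only if'') direction you and the paper use the same ingredients---the single-tile-first strategy of Lemma~\ref{lem:stf}, the threshold corollary for single tiles, and the monotonicity of $\sgg$ (Observation~\ref{obs:monsgg})---but organize them differently: the paper applies Lemma~\ref{lem:stf} recursively to extract a full, possibly unsorted, decomposition $u_n,u_{n-1},\dots$ with $u_i\le\sgg(i,A)$, and then argues globally that sorting this sequence by swaps of out-of-order pairs preserves the inequalities; you instead induct on $n$, peel a single cell, and repair the index misalignment locally (sortedness covers index $k$ via $v_k\le v_{k+1}=u_k$, monotonicity covers the indices above $k$). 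These are essentially equivalent ways of handling the fact that the peeled values need not come out in sorted order, and your bookkeeping is careful and correct. Where you genuinely add something is the converse (``if'') direction: the paper's proof as written only establishes that reachability implies the inequalities, deferring the reverse implication to the subsequent corollary (``how to play to reach any single position''), whose correctness is in turn said to follow from the theorem---a somewhat circular-looking arrangement. Your explicit induction for this direction---reach the single tile $v_n$ first using the threshold corollary, then embed an $\agg(n-1,A)$ game on the remaining cells---is exactly the strategy of that corollary, but spelled out as a self-contained proof, which tightens the logic. One small omission: in the ``only if'' direction you should dispose of the case where $P$ is entirely empty, since Lemma~\ref{lem:stf} presupposes a nonempty cell; there all $v_i=0$ and the inequalities hold vacuously.
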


\begin{proof}
By applying the single-tile-first strategy (Lemma~\ref{lem:stf}) recursively,
we may decompose $P$ into a sequence of tile values $u_n$ (the single tile used by the strategy to reach $P$),
$u_{n-1}$ (the tile used by applying Lemma~\ref{lem:stf} to the game $\agg(n-1,A)$ after constructing tile $u_n$, $\dots$ padding the sequence with zeros if necessary.
Then by construction $u_n$ is achievable in game $\agg(n,A)$, $u_{n-1}$ is achievable in game $\agg(n-1,A)$, etc., so these values satisfy inequalities $u_i\le\sgg(i,A)$ for all~$i$
like the ones in the statement of the lemma.

This decomposition need not be sorted. However, because the values of $\sgg(i,A)$ are monotonically non-decreasing (Observation~\ref{obs:monsgg}),
swapping any two values of $u_i$ and $u_j$ that are out of order preserves the inequalities between these values and $\sgg(i,A)$ and $\sgg(j,A)$.
Since the sorted sequence of values $v_i$ can be obtained from the sequence $u_i$ by such swaps, it also obeys all the same inequalities.
\end{proof}

Using this characterization we can strengthen Lemma~\ref{lem:stf} to more explicitly describe a game strategy for reaching any given position.

\begin{corollary}[how to play to reach any single position]
Let $P$ be any reachable position in game $\agg(n,A)$.
Then the following strategy for playing the game reaches $P$:
\begin{itemize}
\item If $P$ contains more than one tile, first play the strategy recursively to reach a position with one nonempty cell, containing the largest tile value in $P$. Then continue recursively in the game $\agg(n-1,A)$ on the remaining cells to construct the remaining tiles of $P$.
\item If $P$ contains only one tile, of value $x$, let $y$ be the largest value in $A$ that is less than $v$. Play the strategy recursively to reach a position with two  nonempty cells, with values $y$ and $x-y$, and then in the final step of the recursive strategy merge these two values.
\end{itemize}
\end{corollary}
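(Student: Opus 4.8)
The plan is to argue by strong induction on the total tile value $T=\sum_i v_i$ of the target position $P$, treating the two bullets of the strategy as the two cases of the inductive step and using $T$ as a well-founded measure: it is a nonnegative integer that I will show strictly decreases in every recursive invocation. The base cases are the empty position (total value $0$, reached by the empty sequence of steps) and the single tile of value $1$ (total value $1$, reached by a single placement of the initial tile $a=1$); in every other case $P$ is nonempty with $T\ge 2$, and exactly one of the two bullets applies. The inductive hypothesis is that the strategy reaches every reachable position whose total value is smaller than $T$, so the essential bookkeeping at each step is (i) to verify that the smaller positions handed to the recursion are themselves reachable, and (ii) to verify that their totals are strictly below $T$.

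For the multi-tile case, suppose $P$ has at least two nonempty cells and let $M=v_n$ be its largest value. Since $P$ is reachable, Theorem~\ref{thm:xreach} gives $M\le\sgg(n,A)$, so by the threshold corollary the single-tile position $P_M$ is reachable in $\agg(n,A)$; and the position $P''$ obtained by deleting one copy of $M$ has sorted values $v_1,\dots,v_{n-1}$, which still satisfy $v_i\le\sgg(i,A)$, so $P''$ is reachable in $\agg(n-1,A)$ by the same theorem. Both $P_M$ (total value $M\le T-1$) and $P''$ (total value $T-M<T$) have strictly smaller total value, so the inductive hypothesis applies to each, and Lemma~\ref{lem:stf} guarantees that the two constructions compose: building $M$ in a cell $c$ leaves the other $n-1$ cells empty, and the subsequent play of $\agg(n-1,A)$ never touches $c$, so concatenating the two strategies reaches $P$ with no further merge.

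For the single-tile case $P=P_x$ with $x>1$, let $y$ be the largest element of $A$ below $x$. Because $P_x$ is reachable, Lemma~\ref{lem:step-by-step} yields both that $P_y$ is reachable in $\agg(n,A)$ and that some position $Q$ of total value $x-y$ is reachable in $\agg(n-1,A)$. I build $P_y$ first (total value $y<x$), then build $Q$ on the remaining $n-1$ cells (total value $x-y<x$); the inductive hypothesis applies to both since their totals are smaller than $x=T$. Finally I merge the tile $y$ together with the tiles forming $Q$ into a single tile of value $x$, which is exactly the forward direction of Lemma~\ref{lem:step-by-step}. I would read the phrase ``two nonempty cells with values $y$ and $x-y$'' in this light: $Q$ need not be a single tile (indeed $x-y$ need not lie in $A$, as one sees already for $A=\{1,2,3,7,\dots\}$ with $x=7$), so the final merge combines the tile $y$ with the whole group summing to $x-y$.

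The main thing to get right is the well-foundedness of the \emph{mutual} recursion between the two cases: the multi-tile case calls the single-tile case (to build $M$) and recurses in $\agg(n-1,A)$, while the single-tile case calls itself (to build $P_y$) and the multi-tile case (to build $Q$). The board size $n$ alone does not obviously decrease across these calls, so the crux is to take total tile value as the induction measure and to check—as done above—that each of the four recursive invocations strictly lowers it. The only other point requiring care is ensuring at every step that the smaller positions passed to the recursion are genuinely reachable, so that the inductive hypothesis is applicable; this is precisely what Theorem~\ref{thm:xreach}, the threshold corollary, and Lemma~\ref{lem:step-by-step} supply.
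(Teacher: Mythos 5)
Your proof is correct and takes essentially the same approach as the paper's, whose entire argument is one sentence: use Theorem~\ref{thm:xreach} to check that each recursive goal of the strategy is itself reachable. Your write-up supplies the details that sentence leaves implicit---the well-foundedness of the mutual recursion via induction on total tile value, the use of Lemma~\ref{lem:step-by-step} and the single-tile threshold corollary to certify the recursive goals, and the correct reading of ``values $y$ and $x-y$'' as a tile $y$ plus a group of tiles summing to $x-y$ when $x-y\notin A$.
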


The correctness of this strategy follows easily by using Theorem~\ref{thm:xreach} to prove that each recursive goal within this strategy is itself reachable.

Putting the results of this section together, we have the following simple recurrence for computing $\sgg(n,A)$ and $\tgg(n,A)$:

\begin{theorem}[recurrence for single-tile and total-value reachability]
\label{thm:2048-recur}
Beginning with
\[
\sgg(0,A)=\tgg(0,A)=0,
\]
we may compute $\sgg(n,A)$ as the smallest
value in $A$ whose difference from the next larger value in $A$ is larger than $\tgg(n-1,A)$,
or $\infty$ if no such value exists.
We may compute
\[
\tgg(n,A)=\sgg(n,A)+\tgg(n-1,A)=\sum_{i=1}^n\sgg(i,A).
\]
\end{theorem}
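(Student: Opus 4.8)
The plan is to prove both halves of the recurrence simultaneously by induction on $n$, treating $\tgg(n,A)$ as the maximum total tile value over all reachable positions of $\agg(n,A)$. The base case $n=0$ is the stated initialization $\sgg(0,A)=\tgg(0,A)=0$. The single fact that makes the whole argument go through, and which I would isolate first, is that when the initial tile value is $a=1$ every step increases the board's total value by exactly $1$: placing a tile adds $1$, and a merge leaves the total unchanged because it replaces a set of tiles by a single tile of equal total value. Consequently, after $k$ steps the board has total value exactly $k$, and truncating any play after $t\le\tgg(m,A)$ steps reaches a position of total value $t$. Thus the reachable total values of $\agg(m,A)$ are exactly the integers $0,1,\dots,\tgg(m,A)$, so condition~2 of \autoref{lem:step-by-step} (that there is a sequence in $\agg(n-1,A)$ reaching total value $x-y$) becomes equivalent to the single inequality $x-y\le\tgg(n-1,A)$.

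For the $\sgg$ recurrence I would feed this equivalence into \autoref{lem:step-by-step}. For a value $x\in A$ with predecessor $y$, that lemma now says $P_x$ is reachable in $\agg(n,A)$ if and only if $P_y$ is reachable and $x-y\le\tgg(n-1,A)$. Starting from the smallest value of $A$ (whose single-tile position is trivially reachable by placing one tile) and scanning upward, reachability of $P_x$ therefore propagates exactly as long as every consecutive gap encountered so far is at most $\tgg(n-1,A)$, and it fails permanently at the first value $x^\ast$ whose forward gap to its successor exceeds $\tgg(n-1,A)$: every value up to and including $x^\ast$ is reachable, while nothing beyond $x^\ast$ is, since reaching any larger single tile would, by repeated application of the lemma, require the unreachable successor of $x^\ast$ to be reachable. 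By the threshold corollary this means $\sgg(n,A)=x^\ast$, that is, the smallest value of $A$ whose difference from the next larger value exceeds $\tgg(n-1,A)$, and $\sgg(n,A)=\infty$ precisely when no such gap exists.

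For the $\tgg$ recurrence I would invoke \autoref{thm:xreach}: a position with sorted tile values $v_1\le\cdots\le v_n$ (padded with zeros) is reachable iff $v_i\le\sgg(i,A)$ for all $i$. To maximize the total value $\sum_i v_i$ one sets each $v_i$ to its individual ceiling $\sgg(i,A)$; this is a legal reachable position because the monotonicity of the thresholds (\autoref{obs:monsgg}) guarantees that these choices are already sorted and each equals an element of $A$, and \autoref{thm:xreach} certifies its reachability. Hence $\tgg(n,A)=\sum_{i=1}^n\sgg(i,A)$, from which the one-step form $\tgg(n,A)=\sgg(n,A)+\tgg(n-1,A)$ follows by splitting off the last term and using the inductive hypothesis $\tgg(n-1,A)=\sum_{i=1}^{n-1}\sgg(i,A)$. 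This closes the induction, since the value of $\tgg(n-1,A)$ used in the $\sgg$ step is exactly the quantity just certified at stage $n-1$.

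I expect the main obstacle to be the interlocking of the two recurrences: $\sgg(n,A)$ is defined through $\tgg(n-1,A)$, while $\tgg(n,A)$ depends on $\sgg(n,A)$, so I must carry both statements through one induction rather than proving either in isolation. The most delicate single ingredient is the contiguity claim that every total value below the maximum is achievable, which is what converts the existential ``reachable total value'' hypothesis of \autoref{lem:step-by-step} into a clean numerical comparison; without the observation that $a=1$ makes each step increment the total by one, condition~2 would not reduce to an inequality against $\tgg(n-1,A)$. I would also dispatch the $\infty$ cases explicitly: if $\sgg(i,A)=\infty$ for some $i\le n$, then arbitrarily large single tiles, and hence unbounded totals, are reachable, so the relevant $\sgg$ and $\tgg$ values are infinite and the stated formulas hold in the extended arithmetic.
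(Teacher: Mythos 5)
Your proof is correct and takes essentially the same route as the paper's: Lemma~\ref{lem:step-by-step} drives the $\sgg$ recurrence, and Theorem~\ref{thm:xreach} together with Observation~\ref{obs:monsgg} yields $\tgg(n,A)=\sum_{i=1}^n\sgg(i,A)$. The only difference is that you spell out what the paper leaves implicit---that with $a=1$ the total value equals the number of steps taken, so condition~2 of Lemma~\ref{lem:step-by-step} reduces to the inequality $x-y\le\tgg(n-1,A)$---which is a useful filling-in of detail rather than a genuinely different argument.
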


\begin{proof}
The computation of $\sgg(n,A)$ follows from Lemma~\ref{lem:step-by-step}. By that lemma, each tile value up to the given value can be reached from its predecessor in~$A$, and each larger value cannot be reached.

The computation of $\tgg(n,A)$ follows from Theorem~\ref{thm:xreach}. By that lemma,
the tile values of any reachable position are individually dominated by the values in the reachable position that has one tile of each value $\sgg(i,A)$ for $i$ ranging from $1$ to $n$.
The sum in the formula gives the value of this position, which clearly obeys the stated recurrence.
\end{proof}

\begin{corollary}[termination if and only if gaps are unbounded]
For every $A$, the values of $\sgg(n,A)$ and $\tgg(n,A)$ are finite for all $n$ (and the game $\agg(n,A)$ necessarily terminates for all $n$) if and only if the gaps between consecutive members of $A$ are not bounded in size.
\end{corollary}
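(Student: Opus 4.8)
The plan is to read everything off the recurrence of Theorem~\ref{thm:2048-recur}, treating termination purely as a property of the sequence $\tgg(n,A)$. Recall from that recurrence that $\sgg(n,A)$ is the smallest $v\in A$ whose gap to the next larger element of $A$ exceeds $\tgg(n-1,A)$, and equals $\infty$ exactly when no gap of $A$ is strictly larger than $\tgg(n-1,A)$; moreover $\tgg(n,A)=\sgg(n,A)+\tgg(n-1,A)$. Consequently $\tgg(n,A)$ is finite precisely when every $\sgg(i,A)$ with $i\le n$ is finite, and the game $\agg(n,A)$ terminates precisely when its total achievable value $\tgg(n,A)$ is finite. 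The three stated conditions therefore coincide, and it suffices to decide finiteness of the $\sgg$ values.

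For the first implication I would assume the gaps of $A$ are unbounded and prove by induction on $n$ that $\sgg(n,A)$ and $\tgg(n,A)$ are finite. The base case is immediate from $\sgg(0,A)=\tgg(0,A)=0$. For the inductive step, the hypothesis gives a finite value $\tgg(n-1,A)$; since the gaps of $A$ are unbounded, some gap is strictly larger than this finite number, so a qualifying $v$ exists and $\sgg(n,A)$ is finite. Then $\tgg(n,A)=\sgg(n,A)+\tgg(n-1,A)$ is finite as well, closing the induction.

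For the converse I would argue the contrapositive: if the gaps of $A$ are bounded, say every gap is at most some constant $B$, then not every $\sgg(n,A)$ can be finite. Suppose toward a contradiction that they all are. Each $\sgg(i,A)$ is a positive integer tile value, hence at least $1$, so the partial sums $\tgg(n-1,A)=\sum_{i=1}^{n-1}\sgg(i,A)\ge n-1$ grow without bound; choose $n$ with $\tgg(n-1,A)\ge B$. Then no gap of $A$ is strictly larger than $\tgg(n-1,A)$, so by the recurrence $\sgg(n,A)=\infty$, contradicting the assumption that all $\sgg$ values are finite.

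The only real subtlety, and the step I would check most carefully, is the interface between the unbounded/bounded gap hypothesis and the strict inequality ``larger than'' appearing in the recurrence: unboundedness must supply a gap strictly exceeding an \emph{arbitrary} finite threshold, while boundedness must forbid any gap from exceeding a threshold that the partial sums are guaranteed to reach. Both issues dissolve once one notes that the $\tgg$ values are nondecreasing and increase by at least $1$ at each step, so under the assumption that they remain finite they march past every fixed bound $B$; the contradiction argument only needs them to reach $B$, which that assumption guarantees.
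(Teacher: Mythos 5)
Your proposal is correct and follows essentially the same route as the paper: both read the result off the recurrence of Theorem~\ref{thm:2048-recur}, using the fact that the finite values of $\tgg(n,A)$ grow without bound so that finiteness of $\sgg(n,A)$ for all $n$ demands gaps exceeding every threshold. The only cosmetic difference is that you justify the growth of $\tgg$ arithmetically (each $\sgg(i,A)\ge 1$), whereas the paper argues it via game play (one can always add a value-one tile to a reachable position); your explicit induction and contrapositive simply spell out what the paper's proof states in compressed form.
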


\begin{proof}
As a sum of values of $\sgg$, $\tgg$ is finite if and only if $\sgg$ is.
Additionally, $\tgg$ is strictly monotonically increasing, because it is always possible to add a single tile of value one to a reachable position in $\agg(n-1,A)$ and produce a higher-value reachable position in $\agg(n,A)$. Therefore, for larger and larger values of $n$, the formula for $\sgg(n,A)$ will require us to find correspondingly larger gaps in the sequence of values in $A$. This will be possible, leading to finite values of $\sgg(n,A)$ for all $n$, if and only if $A$ has gaps of unbounded size.
\end{proof}

\section{Making change}

The change-making problem involves making change for a given amount of money, using as few coins as possible from a given set of coin denominations. Most countries have coinage that allows the problem to be solved optimally by a greedy algorithm: to make change for a given amount of money $x$, first select the largest-valued coin whose value $y$ is less than or equal to $x$,
and then (if $x\ne y$) recursively solve the remaining change-making subproblem for the value $x-y$. However, greedy change-making is not always optimal. For instance, consider the situation of a cashier who is trying to make change in US money, for which the most commonly-used coin denominations are 1 cent (the penny), 5 cents (the nickel), 10 cents (the dime), and 25 cents (the quarter). To make change for 30 cents, the optimal choice would be the greedy choice, one quarter and one nickel. But if the change tray is out of nickels, so that the only coin values available are 1, 10, and 25 cents, the optimal choice would be three dimes, while the greedy algorithm would instead choose a quarter and five pennies, twice as many coins.

Optimal change-making is weakly NP-hard but has a pseudopolynomial time dynamic program that is often used as an example or an exercise in undergraduate algorithms classes~\cite{CorLeiRiv-09,GooTam-15}.
However, although there have also been studies on sets of coins that would lead to small solutions~\cite{Sha-MI-03} or on counting distinct ways of making change~\cite{BeySwi-CACM-73}, much of the research on change-making has focused on a different problem: for which coinage systems is the greedy algorithm optimal~\cite{AdaAda-EJC-10,CowCowSte-EJC-08,MagNemTro-OR-75,KozZak-TCS-94,Pea-ORL-05}? This can be tested in polynomial time~\cite{Pea-ORL-05}.

A particularly simple test (the Magazine--Nemhauser--Trotter one-shot test) determines whether a system of coins has a stronger property, that if the coins are sorted by value from smallest to largest, every prefix of this sorted sequence forms a set of coins for which greedy change-making is optimal. For each prefix let $x$ and $y$ be the largest and second-largest coins in the prefix;
then the one-shot test rounds $x$ up to an integer multiple $ky$ of $y$ and applies the greedy change-making algorithm to this number $ky$. If it uses more than $k$ coins, the greedy algorithm is suboptimal, but if every prefix uses this number of coins or fewer, then the greedy algorithm can be proven to be optimal for all prefixes~\cite{MagNemTro-OR-75}.
Following Cowen et al.~\cite{CowCowSte-EJC-08}, we call a system of coins that passes this test \emph{totally greedy}. Although the change-making problem is usually considered only for finite sets of coin denominations, the one-shot test and the prefix-greedy definition make sense equally well for infinite sets. For instance, the powers of two are totally greedy (the $i$th instance of the one-shot test uses one coin to represent the test value $2\cdot 2^{i-1}=2^i$) as are the Fibonacci numbers (the $i$th instance of the one-shot test uses two coins to represent the test value $2F_{i-1}=F_i+F_{i-3}$).

In connection with our analysis of abstract generalized 2048, we are interested in the behavior of the greedy algorithm on arbitrary coinage systems, regardless of whether the greedy algorithm is optimal for the system. The following quantity is of particular interest:

\begin{definition}[hard-to-change inputs to the greedy algorithm]
For any integer $n\ge 0$ and set of positive integer coin values $A$, we define $\ggg(n,A)$ to be the smallest integer $x$ that causes the greedy change-making algorithm to use at least $n$ coins.
\end{definition}

The following result is folklore; it is possible that it was first observed by Pillai in his 1930 study of greedy change-making for prime-number coin values~\cite{Pil-AUJ-30} but we have been unable to obtain a copy of his paper to check.

\begin{lemma}[recurrence for hard-to-change inputs]
\label{lem:ggg-recur}
We may compute $\ggg(n,A)$ using the recurrence
\[
\ggg(n,A)=\ggg(n-1,A)+x,
\]
where $x$ is the smallest member of $A$ such that the difference between $x$ and the next-larger member of $A$ exceeds $\ggg(n-1,A)$, and with the base case $\ggg(0,A)=0$.
\end{lemma}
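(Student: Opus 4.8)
The plan is to prove the recurrence directly from the defining recursion of the greedy algorithm, treating $\ggg(n-1,A)$ as a known quantity and locating the smallest input whose greedy representation uses at least $n$ coins. Write $g(v)$ for the number of coins the greedy algorithm spends on the value $v$, and abbreviate $G=\ggg(n-1,A)$. The only structural fact I need is the defining recursion of greedy change-making: if $c$ is the largest coin of $A$ with $c\le v$, then the algorithm spends one coin on $c$ and thereafter behaves on the remainder exactly as it would on the standalone input $v-c$, so $g(v)=1+g(v-c)$; equivalently, $g(v)\ge n$ if and only if $g(v-c)\ge n-1$. The base case $\ggg(0,A)=0$ holds because the empty representation of $0$ already uses $0\ge 0$ coins.

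Next I would partition the positive integers according to the largest coin that fits. For a coin $c\in A$ with next-larger coin $c'$ (taking $c'=\infty$ when $c$ is the largest element of $A$), every $v$ with $c\le v<c'$ has $c$ as its largest usable coin, so on this interval $g(v)=1+g(v-c)$ with the remainder $w=v-c$ ranging over $0\le w<c'-c$. Finding the smallest $v\ge c$ with $g(v)\ge n$ thus reduces to finding the smallest remainder $w$ in $[0,c'-c)$ with $g(w)\ge n-1$. Two properties of $G$ pin this down: by minimality every $w<G$ has $g(w)<n-1$, while $g(G)\ge n-1$. Hence the smallest remainder attaining $n-1$ coins is $w=G$ itself, and it falls in the admissible range exactly when $G<c'-c$, i.e.\ when the gap at $c$ exceeds $G$. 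Consequently, if the gap at $c$ exceeds $G$ the smallest value in $[c,c')$ reaching $n$ coins is $c+G$; if the gap is at most $G$ then $[0,c'-c)\subseteq[0,G)$, so every remainder gives $g(v)\le n-1$ and no value in $[c,c')$ reaches $n$ coins at all.

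Finally I would minimize $c+G$ over the admissible coins. By the previous step, each coin $c$ whose gap exceeds $G$ first reaches $n$ coins at $v=c+G$, while coins with gap at most $G$ never reach $n$ coins. Every coin smaller than $x$ has gap at most $G$ by the choice of $x$ and so contributes no qualifying value, whereas every coin larger than $x$ contributes its first qualifying value $c+G>x+G$. Since $G$ is fixed, the global smallest qualifying input is therefore $x+G$, giving $\ggg(n,A)=\ggg(n-1,A)+x$, exactly as claimed. If no coin has gap exceeding $G$ — which can occur only for infinite $A$ with bounded gaps — then no input reaches $n$ coins and $\ggg(n,A)=\infty$, consistent with the convention in the definition.

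The main obstacle to guard against is that $g$ is not monotonic: there are values $v_1<v_2$ with $g(v_1)>g(v_2)$, so one cannot simply assert that ``every input at least $\ggg(n-1,A)$ uses at least $n-1$ coins.'' The argument sidesteps this by working one coin-interval at a time, where the question becomes the monotonicity-free statement ``what is the smallest remainder $w$ with $g(w)\ge n-1$,'' and there the defining minimality of $\ggg(n-1,A)$ answers it outright. A secondary point to verify is the boundary behavior: the largest coin of a finite set $A$ must be treated as having an infinite gap, so that it is always admissible, and one should record the standing assumption that the coinage makes every relevant input representable (as with $1\in A$), so that $g$ is defined all along the greedy trajectory.
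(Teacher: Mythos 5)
Your proof is correct and follows essentially the same approach as the paper's: both decompose a qualifying input $s$ by its leading greedy coin $t$ (equivalently, your coin-interval $[c,c')$), apply the minimality of $\ggg(n-1,A)$ to the remainder, and use the gap condition to determine which coins can lead a qualifying value. Your interval-by-interval organization is just a more explicit bookkeeping of the paper's direct ``smallest number of the form $t+u$'' argument, so no substantive difference remains.
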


\begin{proof}
The greedy algorithm will use $n$ or more coins on a given number $s$ if and only if $s$ has the form $t+u$ where $t$ is a member of $A$, $t+u$ is less than the next larger member of $A$ (so that the greedy algorithm begins by choosing $t$), and the greedy algorithm uses $n-1$ or more coins on $u$ (its recursive subproblem). The number $\ggg(n-1,A)+x$ has this form, with $t=x$ and $u=\ggg(n-1,A)$. It is the smallest number with this form, because any smaller value of $u$ would not cause the greedy algorithm to use $n-1$ or more coins on $u$, and any smaller value of $t$ with the same or larger value of $u$ would cause there to exist another member $r$ of $A$ in the range $t<r\le t+u$, preventing the greedy algorithm from starting by choosing~$t$.
\end{proof}

We are now ready to prove our min-max theorem relating 2048 to change-making:

\begin{theorem}[equality of 2048 and greedy change-making]
The maximum total value achieved in an $n$-cell abstract greedy 2048 game, $\tgg(n,A)$,
equals the minimum value that would cause the greedy change-making algorithm
to use $n$ or more coins, $\ggg(n,A)$.
\end{theorem}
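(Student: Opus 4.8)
The plan is to prove the equality $\tgg(n,A)=\ggg(n,A)$ by induction on $n$, showing that both quantities obey exactly the same recurrence. The two ingredients are already in hand: Theorem~\ref{thm:2048-recur} supplies the recurrence for $\tgg$, while Lemma~\ref{lem:ggg-recur} supplies the recurrence for $\ggg$. My first step is to place these two recurrences side by side and note that they share the base case $\tgg(0,A)=0=\ggg(0,A)$ and have identical structure.

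For the inductive step, I would assume $\tgg(n-1,A)=\ggg(n-1,A)$ and compare how each quantity advances to index $n$. In the $\tgg$ recurrence, the increment is $\sgg(n,A)$, defined as the smallest member of $A$ whose gap to the next-larger member of $A$ strictly exceeds $\tgg(n-1,A)$. In the $\ggg$ recurrence, the increment is the value $x$, defined as the smallest member of $A$ whose gap to the next-larger member of $A$ strictly exceeds $\ggg(n-1,A)$. Under the inductive hypothesis these two selection criteria refer to the same threshold, so they select the same element of $A$; hence $\sgg(n,A)=x$. The update rules then read $\tgg(n,A)=\sgg(n,A)+\tgg(n-1,A)$ and $\ggg(n,A)=x+\ggg(n-1,A)$, which coincide, completing the induction.

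One point I would be careful to verify is the treatment of the case where no member of $A$ has a sufficiently large gap: both recurrences return $\infty$ under this condition, and the inductive hypothesis guarantees that the nonexistence of such an element occurs for $\sgg(n,A)$ exactly when it does for $x$, so the two sequences become infinite in lockstep. I expect no serious obstacle here: the theorem is essentially the observation that the two recurrences, derived independently from the game semantics and from the greedy change-making semantics, are term-for-term identical. The only real content is recognizing that the gap-based selection rules match under the inductive hypothesis, so the substance of the argument lives entirely in the earlier results establishing those two recurrences.
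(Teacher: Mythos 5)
Your proposal is correct and takes essentially the same approach as the paper: the paper's own proof is the one-line observation that, by Theorem~\ref{thm:2048-recur} and Lemma~\ref{lem:ggg-recur}, both quantities are computed by the same recurrence with the same base case. Your induction simply spells out explicitly (including the $\infty$ case) what that one line leaves implicit, so the substance is identical.
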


\begin{proof}
By Theorem~\ref{thm:2048-recur} and Lemma~\ref{lem:ggg-recur}, both of these numbers are computed by the same recurrence with the same base case.
\end{proof}

\section{Specific sets of tile values}

The Python code in \autoref{tbl:total} takes as input a generator for a sorted sequence $A$ of tile values in an abstract generalized 2048 game (or of coin values in a greedy change-making problem), and returns a generator for the sequence of total tile values $\tgg(n,A)$ achievable with $n=1,2,3,\dots$ cells. It does so by computing, for each tile value in $A$, the gap between that value and the previous value, and when that gap is large enough using it to take a step in the recurrence for $\tgg(n,A)$. As can be seen from the code, the total space necessary (beyond that for generating $A$) consists only of a constant number of integer variables.
It is not possible to analyze the performance of this algorithm in terms of the variable $n$ without knowing more about the behavior of gaps in the sequence~$A$, but we can at least state that the time to generate all values $\tgg(n,A)$ that are below some threshold value $N$ is at most proportional to the time to generate all values in $A$ below the same threshold.

\begin{table}[ht]
\begin{lstlisting}
def Total(A):
    single,total = 1,0
    for tile in A:
        while tile > single + total:
            total += single
            yield total
        single = tile
\end{lstlisting}
\caption{Python code to generate the sequence of values $\tgg(n,A)$ from a generator for sequence $A$, using only a constant number of additional integer variables.}
\label{tbl:total}
\end{table}

For sequences that are totally greedy, the same algorithm will determine more strongly the smallest value that requires $n$ terms to represent as a sum of sequence values (not just as a greedy sum).
We ran this code using several different integer sequences~$A$, to determine for each one its corresponding sequence of maximum achievable total game values $\tgg(n,A)$. We identify each sequence using its code in the Online Encyclopedia of Integer Sequences (\href{https://oeis.org}{oeis.org}), a string of the form A$xxx$ where the $x$'s are decimal digits. Although some of these sequences would be problematic for games that combine tile values only in pairs (because their tile values cannot be reached by such pairwise combinations), this is not an issue for our abstract generalized 2048 game, which allows combinations of more than two tiles at once.

\begin{description}
\item[\oeis{A000040}]~\\
This is the sequence of prime numbers, $2, 3, 5, 7, 11,\dots$, in which we included also 1 (even though it is not prime) to make a valid set of tile or coin values.  It is not totally greedy.
When $A$ is this sequence, $\tgg(n,A)$ is Pillai's sequence~\cite{Pil-AUJ-30,LucTha-JTNB-09} \oeis{A066352} of the numbers $1,4,27,1354,401429925999155061,\dots.$
Because the gaps in the prime numbers grow so slowly, it has been estimated in the OEIS that the next number of this sequence would require hundreds of millions of digits.

\smallskip
\item[\oeis{A000045}]~\\
This is the sequence of Fibonacci numbers, $1, 2, 3, 5, 8, 13, \dots$, used in the 987 game. It is totally greedy.
When $A$ is this sequence, $\tgg(n,A)$ is the sequence \oeis{A027941} of numbers $F_{2n+1}-1=1, 4, 12, 33, 88, \dots$ of every other Fibonacci number, minus one.

\smallskip
\item[\oeis{A000079}]~\\
This is the sequence of powers of two, $2^i=1,2,4,8,\dots$, used in the 2048 game. It is totally greedy.
When $A$ is this sequence, $\tgg(n,A)$ is the sequence of Mersenne numbers \oeis{A000225}, $2^{n-1}-1=1,3,7,15,\dots$.

\smallskip
\item[\oeis{A000225}]~\\
This is the sequence of Mersenne numbers, $M_i=2^{i+1}-1=1,3,7,15,\dots$. It is also totally greedy,
because each prefix of the sequence passes the one-shot test according to the identity $3M_i=M_{i+1}+2M_{i-1}$. When $A$ is this sequence, $\tgg(n,A)$ is the sequence \oeis{A000325} of numbers $2^n-n=1,2,5,12,27,\dots$ which is not totally greedy ($3\cdot 12=27+5+2+2$ is expanded to four coins, not three, by the greedy algorithm, failing the one-shot test).

\smallskip
\item[\oeis{A005153}]~\\
This is the sequence of practical numbers $1, 2, 4, 6, 8, 12, 16,\dots$ discussed in the introduction.
It can be generated by using a variation of the sieve of Eratosthenes to generate the factorizations of each positive integer, and then using an efficient test of Stewart and Sierpinski~\cite{Ste-AJM-54,Sie-AMPA-55} to determine from each factorization whether each integer is practical.
When $A$ is this sequence, $\tgg(n,A)$ is a sequence beginning $1,3,11,191$, not in the OEIS.
Because the gaps in the sequence of practical numbers are (like the gaps in the primes) slowly growing, the next number in the sequence should be quite large.

\smallskip
\item[\oeis{A003586}]~\\
This is the sequence of $3$-smooth numbers $1, 2, 3, 4, 6, 8, 9, \dots$ (the numbers having only $2$ or $3$ as prime factors), discussed in the introduction. It is not totally greedy. When $A$ is this sequence, $\tgg(n,A)$ is the sequence \oeis{A296840}: $1, 5, 23, 185, 1721, 15545, 277689,\dots$.

\smallskip
\item[\oeis{A029744}]~\\
This is the sequence of numbers $2^i$ or $3\cdot 2^i=1, 2, 3, 4, 6, 8, 12,\dots$ used in the game Threes. It is totally greedy. When $A$ is this sequence, $\tgg(n,A)$ is the sequence \oeis{A002450} of numbers $(4^{n+1}-1)/3=1,5,21,85,341,\dots$.

\smallskip
\item[\oeis{A126684}]~\\
This is the sequence of numbers $1, 2, 4, 5, 8, 10, 16, 17, 20, 21, 32, \dots$
whose binary representations have either all even bit positions zero or all odd bit positions zero.
It gives perhaps the most extreme example of the distinction between optimal and greedy change-making:
for a system of coins with these values, any amount of change can be made with at most two coins,
and the $\Theta(n^2)$ growth rate of this sequence is the fastest possible for this two-coin property. However, greedy change-making will typically use more than two coins. For instance, although $13$ can be represented as the sum of two sequence members $8+5$, its greedy representation is $10+2+1$. When $A$ is this sequence, $\tgg(n,A)$ is the sequence \oeis{A302757} of numbers $1, 3, 13, 55, 225, 907, 3637, \dots$, which grows exponentially according to the recurrence $a_n=4a_{n-1}+2n-5$.
\end{description}

\noindent
The tile values $1,2,3,5,10,20,40,80,\dots$ in the game Fives are not listed in the OEIS, but the maximum achievable values $\tgg(n,A)$ form the sequence \oeis{A052549} of numbers  $1, 4, 9, 19, 39, 79$. They have the formula $\lfloor 5\cdot 2^{n-2}-1\rfloor$.

A similar analysis could be applied to many other sequences, yielding new sequences not already part of the OEIS. For instance, sequences  \oeis{A296840} and \oeis{A302757}  were added to the OEIS as a result of our investigations, not having been studied before.

\section{Discussion}

We have described an abstract version of the game 2048 that eliminates the geometry and other complicating factors of the game, allowing us to provide a complete analysis of our abstract game for any set of allowable tile values and any number of cells. We proved a min-max theorem equating the maximum total tile value that can be achieved in this game with the minimum value that would cause a greedy change-making algorithm, using coins of the same value as the tiles, to use the same number of coins as the number of cells in the game. Finally, we showed how to compute the values from this theorem by a streaming algorithm that uses only a constant number of integer variables beyond the requirements of generating the tile value sequence itself, and used our implementation to compute the sequences of maximum game values for several choices of allowable tile value sets.

It would be of interest to understand  in more detail for which non-abstract 2048-like games this analysis is tight or nearly tight, and for which it fails to capture the game dynamics and produces a bound on the total game value that is large compared to the actual achievable value. For instance, experience with 2048 and 987 suggests that, in those games, a strategy close to that of the abstract game can usually be followed, leading to total game values similar to what could be achieved in the abstract game. On the other hand, in Threes, the inability to add some pairs of game tiles such as $1+3$, even when the sum would be another allowable tile value, may cause this game's maximum achievable value to be closer to $2^n$ than to the $(4^{n+1}-1)/3$ formula for the total score achievable on the corresponding abstract game. We leave such questions open for future research.

Additionally, some variants of 2048 are not amenable to our analysis. These include 2048 Circle of Fifths, a game based on the circle of fifths in music theory whose tile values involve modular arithmetic~\cite{Hugo}, and 2048 Numberwang, in which the tile combinations that are allowed on each move vary randomly~\cite{Huang}. Developing a theoretical analysis of these games could be fun.

\bibliographystyle{plainurl}
\raggedright
\bibliography{2048}
\end{document}